\documentclass[aip,jmp,reprint]{revtex4-1}
\usepackage{bm}
\usepackage{amsthm}
\usepackage{amsmath}
\usepackage{hyperref}

%\usepackage{graphicx}
%\setlength{\unitlength}{1mm}
%\usepackage[active]{srcltx}
%\usepackage{verbatim}
%\usepackage{color}

%\usepackage[pdftitle={Paper}, pdfauthor={Alejandro Cabrera}]{hyperref}

%TEMP
%\setlength{\topmargin}{1.2cm}
%\setlength{\parindent}{10pt}
%\setlength{\textwidth}{16cm}
%\setlength{\textheight}{22.5cm}
%\setlength{\hoffset}{-1.5cm}
%\setlength{\voffset}{-2cm}

\def\bonlinecite#1{[\onlinecite{#1}]}

% before july 2014
\newtheorem{theorem}{Theorem}

\newtheorem{corollary}{Corollary}
\newtheorem{conv}{Convention}
\newtheorem{prop}{Proposition}
\newtheorem{rmk}{Remark}
\newtheorem{lemma}{Lemma}
\newtheorem{claim}{Claim}

\newenvironment{customthm}[0]
{\innercustomthm}
  {\endinnercustomthm}

\newcommand{\inner}[2]{\ensuremath{\left\langle{#1,#2}\right\rangle}}

\begin{document}

\title{Differentiability of correlations in Realistic Quantum Mechanics}

\author{Alejandro Cabrera} \affiliation{Instituto de Matem\'atica,  
 UFRJ, CEP 21941-909, Rio de Janeiro - Rio de Janeiro, Brazil} 
\author{Edson de Faria} \affiliation{Instituto de Matem\'atica e Estat\'\i stica, 
 USP, Rua do Mat\~ao 1010, SP 05508-090 , S\~ao Paulo, S\~ao Paulo, Brazil} 
 \author{Enrique Pujals} \affiliation{IMPA, Estrada Dona Castorina 110, 22460-320, Rio de Janeiro, Rio de Janeiro, Brazil}
 \author{Charles Tresser} \affiliation{IBM, P.O.  Box 218, Yorktown Heights, NY 10598, U.S.A.}

% \email[]{acabrera@labma.ufrj.br}
% \email[]{edson@ime.usp.br}
% \email[]{enrique@impa.br}
% \email[]{charlestresser@yahoo.com}
\date{6 September 2015}

\begin{abstract}
We prove a version of Bell's Theorem in which the Locality assumption is weakened. 
We start by assuming 
%new
theoretical
quantum mechanics and weak forms of relativistic causality and of realism (essentially the fact that observable values are well defined independently 
of whether or not they are measured).  Under these hypotheses, we show that only one of the correlation functions that can be formulated in the 
framework of the usual Bell theorem is unknown.  We prove that this unknown function must be differentiable at certain angular configuration 
points that include the origin. We also prove that, if this correlation is assumed to be twice differentiable at the origin, then we arrive 
at a version of Bell's theorem. On the one hand, we are showing that any realistic theory of quantum mechanics which incorporates the 
kinematic aspects of relativity must lead to this type of \emph{rough} correlation function that is once but not twice differentiable. 
On the other hand, this study brings us a single degree of differentiability away from a relativistic von Neumann no hidden variables theorem.  
\end{abstract}
%%%pacs number: 03.65.Ta
%\tableofcontents
\maketitle

\section{Introduction}

One interpretation of Bell's Theorem \bonlinecite{Bell}, \bonlinecite{Mermin}  establishes that quantum mechanics (QM) is incompatible with the conjunction of \emph{locality} (the absence of any causal relation between events that are space-like separated in the usual sense within the kinematics of special relativity (SR)) and some form of {\emph{realism}} (according to which, colloquially speaking, values of {observables} are well defined independently of their measurement; see a more detailed discussion below).  We refer the reader to the 2014 volume of Journal of Physics A devoted to ``50 years of Bell's theorem" \bonlinecite{BellVolume} and references therein for recent disputes on the nature of the content and on implications of Bell's theorem.

A related 
%new
theoretical %
question that goes back to the early days of modern quantum mechanics reads:
 \emph{Is there a realistic theory compatible with what we know of microscopic physics?}  In a language that was important at the time of \bonlinecite{Neumann}, about 6 years after the formulation of QM, the question reads: \emph{Is there a predictive hidden variables theory with averages predicted by QM (and fully consistent with all that is known of microphysics \bonlinecite{Bohm0})?}  A negative answer would close an old wound: the proof of von Neumann' 1932 theorem \bonlinecite{Neumann} on the non-existence of hidden variables is faulty \bonlinecite{Hermann}, \bonlinecite{Bell2}. The mistake in \bonlinecite{Neumann} has so far not been fully repaired.  Of course, realism is ruled out by {the} Bell-Kochen-Specker theorems \bonlinecite{KS}, now even proven in forms \bonlinecite{Klyachko} that permit experimental verification \bonlinecite{Lapkiewicz}, but this is under an assumption, \emph{non-contextuality}, that is stronger (or rather less compelling) than locality as already discussed by Bell \bonlinecite{Bell2} (\emph{cf.} \bonlinecite{Mermin}).

In this paper we assume \emph{classical microscopic realism} (CMR, see Section \ref{sec:setting}) and look for extra hypotheses as weak as we can find to get  a 
%new
theoretical %
contradiction as in usual Bell theory.  The assumptions made in this paper, to be discussed below, are proven to be weaker than locality and also to allow for possible contextuality.   
  
More precisely, we introduce the Restricted Effect-After-Cause Principle (REACP) as a substitute for the locality hypothesis. Then, assuming QM, CMR and the REACP, we prove, as our Theorem 1, that some correlations that are central in Bell's theory are once differentiable at the origin. Under the further smoothness assumption of twice differentiability of the correlations at the origin, we get, as our Theorem 2, a contradiction as in usual forms of Bell's Theorem.  
The outline and details of the proofs of our main theorems are discussed in the next Section \ref{sec:out}. Before going into those details, we discuss here the nature of our hypotheses.

As a first approximation to the REACP the EACP part of it tells us that \emph{an observable value that is registered, or that would be defined if observable values would pre-exist to their measurement, cannot change because of later causes}. With this in mind, the ``R" 
in front of ``REACP" means that, contrary to the EACP which applies to all observables, 
 \emph{the REACP is only claimed to affect those values that are actually measured}. 
One can already notice that the REACP is at most as strong an hypothesis as the EACP.   Furthermore, we shall show that the REACP is \emph{weaker than locality} and that \emph{it can be deduced from the kinematics of SR} (which is, anyway, already part of the traditional Bell's theorem discussion). 

It follows from Theorems 1 and 2 that {at least one} of QM, SR, CMR and the extra differentiability assumption that we make (on top of the proven first degree of differentiability) has to be false.  At this point there are two possible viewpoints:

\noindent
- On the one hand, one can reasonably assume that said correlations are in fact twice differentiable. This assumption is generally considered as mild in physics when, like in our case, once differentiability has been established. (An exception is provided by second order phase transition, but nothing lets us presume that we are in such a situation.) Moreover, we are, {after all}, in a context where all correlations appearing in any theory {are} either real-analytic or fail to be differentiable. From this viewpoint, since we prove once differentiability while twice differentiability yields the contraction that we seek, we can say that we are one degree of differentiability away from a relativistic von Neumann no-hidden-variables theorem. 

\noindent
- On the other hand, one can instead argue that such differentiable but non-smooth correlations cannot be considered as particularly unnatural when they involve possible but not actual measurements, as it happens in our context (see the outline in Section \ref{sec:out} below).  

At any rate, our theorems show that some ``rough" correlations 
(\emph{i.e.,} correlations that are only mildly smooth, so in particular not analytic) must be part of any realistic theory of the type that we consider.  In particular, notice that Bohm-de Broglie theory, the most developed and most widely accepted realistic hidden variable theory, is claimed by some of its adopters to retain 
SR at the observational level. With such views on SR, this theory is then 
included in our analysis (see Section \ref{sec:last} for a further discussion).

\bigskip

%new
{\bf About the nature of the results}: Since the interpretation and scope of theoretical explorations such as the present ones can be very subtle, we want to include the following clarifications. %

First, we warn the reader that some of the functions relevant to the paper only make sense when assuming CMR: see the statement after Convention 2.
Also notice that all along this text we liberally speak of functions and their smoothness despite the fact that we do not assume non-contextuality (see, \emph{e.g.,} \bonlinecite{Mermin}).  Even if the correlations are not defined at once for more than a finite number of their arguments, one can collect the values in all counterfactual realizations to build the correlation functions that we study.

%new
 Finally, we would like to stress that our arguments work at the \emph{pure theoretical level}, namely, we follow the usual method of testing if a theory satisfying certain hypotheses can exist by means of Gedankenexperiments. In particular, we only consider idealized quantum mechanical settings (e.g. no noise) and,  a priori, no claims about experimental verification are implied. %

\section{Outline of the results}
\label{sec:out} Our chain of arguments goes as follows.
 Let us first recall the well known fact that \emph{any four sequences of elements in $\{-1,1\}$ must satisfy some inequalities discovered by Boole  
 (in work whose relevance to Bell's theory was recognized and first analyzed by Pitowsky \bonlinecite{Pitowsky2001})}.
 Bell's theorem, as it can be read off Bell's 1964 paper \bonlinecite{Bell}, 
 can then be re-stated
 as follows:  \emph{Sequences of spin measurements that coexist if one assumes CMR cannot coexist when one also assumes locality since, in such case, the corresponding Boole inequalities do not hold.} 
 
\noindent
\begin{enumerate}
 \item As in {the} usual derivations of Bell's theorem, we consider sequences of EPR pairs: for each pair the two particles fly apart so that their detections are space-like separated events. This is detailed in section \ref{sec:setting}. The hypothesis of realism (more precisely, CMR) is defined in section \ref{sec:setting}.  The CMR hypothesis  allows us to consider counterfactual measurements, \emph{i.e.,} measurements that could have been made (instead of the factual ones) using alternative orientations of the detectors, thus defining functions of the orientation angles.
 
\item Using effect-after-cause arguments ({stemming} from our REACP and defined in section \ref{sec:REACP}) we show that the value of the measurement actually made by Bob does not depend on whether Alice's measurement of the other particle is actual or counterfactual for a Lorentz observer who sees Bob measurements as occurring 
before Alice's.  This implies that three out of the four correlation functions appearing in the usual Bell-type inequalities can be computed explicitly.
\item At the same time, we recall that the four correlation functions must obey Boole's lemma (section \ref{subsec:boole}, equations \eqref{CHSHeq}) since, by definition, they correspond to correlations of sequences of $\pm 1$.
 \item We prove (Theorem \ref{diff0}) that the fourth correlation, which is an unknown function of three angle differences, must be differentiable on the three coordinate axes and, in particular, at the origin where the differential is zero. 
 \item We further show (Theorem \ref{maint}) that if this fourth correlation is twice differentiable at the origin, then we arrive at a contradiction. These two theorems together show that there is a strong constraint on the type of correlation functions that appear in any realistic theory of QM which is compatible with the REACP.
 \item Finally,  we discuss the REACP (section \ref{sec:disc}) and the overall conclusions (section \ref{sec:last}). We show that the REACP is implied by the kinematic part of SR (Claim \ref{REACPWeaker}) and that it is strictly weaker than the locality hypothesis (Corollary \ref{REACPandRegWeaker}).
\end{enumerate}

\section{Setting and conventions}\label{sec:setting}
We consider sequences $(p_i,\overline{p}_i)$ of \emph{EPRB pairs} of spin-$\frac{1}{2}$ particles with the \emph{singlet state} as wave function's spin part:  
\[
\Psi_i=\frac{1}{\sqrt{2}}\left(| +\rangle _{p_i}\otimes|
-\rangle_{{\overline{p}}_i}-| -\rangle_{p_i}\otimes|+\rangle_{{\overline{p}}_i}
\right)\,. 
\]
The ``EPRB" acronym evokes a reformulation by Bohm \bonlinecite{Bohm} of the \emph{``EPR paper"} \bonlinecite{EPR} using spin-$\frac{1}{2}$ particle pairs.  In what follows, \emph{corresponding measurements} made by Alice and Bob are assumed to \emph{space-like separated} (\emph{i.e.,} $\|\Delta x\|>c\cdot\|\Delta t\|$ for the space and time distances, where $c$ is the speed of light in the vacuum).  These measurements are respectively made by these two agents on the successive members $p_i$ and $\overline{p}_i$ of the successive pairs.  By definition $p_i$ is the member of the $i^{\rm th}$ pair that flies to Alice while $\overline{p}_i$ is the member of that pair that flies to Bob.  Thus Alice measures the normalized projections $a=\{a_i\}$ of the spins of the $(p_i)$'s along the vector ${\bm{v}}_{\theta_a}$ at angle $\theta_a$ while Bob measures the normalized projections $b=\{b_i\}$ of the spins of the $(\overline p_i)$'s along ${\bm{v}}_{\theta_{\bm b}}$.  All our vectors belong to some fixed plane, and the angle $\theta _{\bm c}$ is measured relative to a reference vector ${\bm{v}}_{\bm 0}$ at angle $0$ chosen once and for all in that plane.  We shall set $\theta_{\bm{cd}}\equiv \theta_{\bm c} -\theta_{\bm d}$.  QM teaches us that the correlation $\inner{a}{b}$ is given by the \emph{twisted Malus law}:
\[
\inner{a}{b}\equiv \lim_{n\to\infty}\frac{1}{n} \sum_{i=1}^n
a_i b_i=-\cos(\theta _{\bm{ab}}). 
\]

A  \emph{Bell type experiment} (\emph{i.e,} an experiment of the type considered in Bell's theory: see \bonlinecite{Bell} and subsequent papers on the same theme) concerns two sequences of observed normalized spin projections: $a=\{a_i\}$ and $b=\{b_i\}$.  Predictive Hidden Variables theories have often been proposed as a mean to obtain a deterministic theory for micro-physics.  The characteristic property of these theories of relevance for us here is the fact that they are realistic, so that observable values have a meaning before (and in fact independently of) being measured.  Instead of using such theories, as Bell in \bonlinecite{Bell}, one tends now (see \emph{e.g.,}  \bonlinecite{Stapp1971} and \bonlinecite{Leggett2008}) to rather use one of the weaker realism assumptions that go collectively under the acronym of CMR.  CMR tells us that \emph{all the observables that could be measured have well defined values}.  The weakest form of CMR needed to discuss Bell's Theory is used in \bonlinecite{Tresser1} and \bonlinecite{Tresser2}.  We will stick to CMR to which we have adapted the following two conventions that go back at least to \bonlinecite{Bell}. 
\begin{conv}
{We assume that any quantities that are not measured, but that exist according to CMR and are co-measurable according to QM, have the values that would have been obtained if they had been jointly measured quantities.}
\end{conv}
It is important to note that Convention $1$ allows for possible contextuality, as opposed to the hypothesis of the Kochen-Specker theorem.

\begin{conv}
{We assume that  the statistical predictions (\emph{e.g.,} average values and in particular correlations) 
are the ones given by  QM.}
\end{conv}
From the CMR hypothesis we extract the existence of functions \[(\alpha,\beta)\mapsto (A(\alpha,\beta),B(\alpha,\beta))\] where $\alpha$ (resp. $\beta$) is the angle that Alice (resp. Bob) could have used to measure the spin projections and $A(\alpha,\beta)$ (resp. $B(\alpha,\beta)$) is the sequence of spin projections thus (conterfactually) measured.

\begin{rmk}
It would be interesting to explore which functional properties $A(\alpha,\beta)$ and $B(\alpha,\beta)$ can have and extract their consequences. For example, it seems reasonable to expect some form of continuity.  Moreover, it could be interesting to explore the existence of such functional properties in the context of a concrete realistic interpretation of QM such as Bohm-de Broglie's theory.  We shall not address these issues in this paper.
\end{rmk}

Among all possible angles $\alpha,\beta$ we distinguish the angles that are experimentally (factually) used and we refer to them as \emph{physical angles} $\theta_a,\theta_b$. In terms of the previous definitions, we thus have $a=A(\theta_a,\theta_b)$ and $b=B(\theta_a,\theta_b)$ or $(a,b)\equiv (a^{\circ\circ}, b^{\circ\circ})$ for simplicity. We use a left (resp. right)  hollow circle in the superscript of $c$ in $\{a,b\}$ to mark that Alice (resp. Bob) has used a physical angle.  We use a filled circle instead to represent the fact that the sequence so marked is a CMR-dependent sequence of values (inferred to exist from CMR along a non-physical angle $\alpha\neq \theta_a$ or $\beta\neq \theta_b$). For instance, besides measurements on both elements of each pair, one may consider the following \emph{alternative situations}:

{\bf{[AS1a]}}  Alice could have chosen another angle of possible measurement $\theta_{a'}$ with Bob keeping the angle $\theta_b$, in which case they would respectively have obtained the sequences $a^{\bullet\circ}\equiv \{a^{\bullet\circ}_i\}$ and  $b^{\bullet\circ}\equiv \{b^{\bullet\circ}_i\}$ (or together the pair of sequences $(a^{\bullet\circ},b^{\bullet\circ})$).

{\bf{[AS1b]}}  Bob could have chosen another angle of possible measurement $\theta_{b'}$, while Alice kept the angle $\theta_a$, in which case they would have obtained the pair of  sequences $(a^{\circ\bullet},b^{\circ\bullet})\,$. 

{\bf{[AS2]}} Alice and Bob could both have chosen the alternative angles $\theta_{a'}$ and $\theta_{b'}$, and thus would 
have obtained the pair of sequences of counterfactual values, $ (a^{\bullet\bullet},b^{\bullet\bullet})$ that is used in  \bonlinecite{dFT}. 

The word \emph{``measured"} when applied to one sequence does not tell us that the corresponding sequence is known: only the co-measured sequences  $(a^{\circ\circ}$ and $b^{\circ\circ})\,$ are  assumed to be known. While the two sequences that make sense together must carry the same super-scripts, we will also consider mixed pairs such as, \emph{e.g.,} $( a^{\bullet \circ}, b^{\circ \bullet})$ (\emph{cf.} (\ref{Ineq1c})).

\begin{rmk} We stress that the sequences which are not measured have a (physical) meaning only when assuming the CMR hypothesis. Moreover, we also stress the fact that, to eventually arrive at a Bell-type contradiction, it is essential to  have such an assumption ensuring the \emph{existence} of the several sequences of $\pm 1$'s which are then bound to satisfy appropriate inequalities (see Boole's lemma below). If one only considers theories yielding probability distributions for outcomes of experiments, the concrete sequences of $\pm 1$'s need not exist and, thus, the inequalities cannot be used to extract a contradiction. For further discussions, see \bonlinecite{Zu}.
\end{rmk}

Recall now that \emph{locality} means that the outputs (either measured or inferred to exist by assuming CMR) do not depend on the choice of vector made by the other observer when the measurements by Alice and Bob are {space-like separated}.  We do not assume locality here as it is known since \bonlinecite{Bell} that assuming QM, CMR, and locality implies a contradiction (indeed, CMR instead of the Predictive Hidden Variables used in \bonlinecite{Bell} works the same). 

\begin{rmk} With the set of hypotheses made so far, there is no reason to expect (as holds true when assuming locality) 
that {$c^{s_1\,t_1} =c^{s_2\, t_2}$ for $c\in \{a,b\}$ and $s_i,t_i\in\{\circ,\bullet\}$,  beyond the case when $s_1=s_2$ and $t_1=t_2$}. In particular, eight different sequences could have been generated,  {namely} $a^{\circ\circ},b^{\circ\circ},a^{\circ\bullet },b^{\circ\bullet},$ $a^{\bullet\circ},b^{\bullet\circ},a^{\bullet\bullet}, b^{\bullet\bullet}$. 
\end{rmk}                       

\medskip
Without further assumption than QM, the twisted Malus law lets us compute $\inner{a^{\circ\circ }}{b^{\circ\circ}}$,  
but when we assume both QM and CMR, more observables values make sense, and it follows {from} Conventions 1 and 2 that this law applies to more pairs, so that we have (see also \bonlinecite{dFT}):

\begin{eqnarray}\label{eq:tmls}
\inner{a^{\circ\circ }}{b^{\circ\circ }}\,=\,-\cos(\theta_{ab}), &  \inner{a^{\circ\bullet }}{b^{\circ\bullet }}\,=\,-\cos(\theta_{ab'}), \nonumber\\ 
\inner{a^{\bullet\circ }}{b^{\bullet\circ }}\,=\, -\cos(\theta_{a'b}), & 
\inner{a^{\bullet\bullet}}{ b^{\bullet\bullet}} \,=\, -\cos(\theta_{a'b'}) \,. 
\end{eqnarray}

\section{Boole's lemma}\label{subsec:boole}
Consider now the inequalities
\[|x_1 + x_2|+ |x_3 - x_4| \leq 2,\] \[|x_1 - x_2|+ |x_3 + x_4| \leq 2.\]
The {points in} ${\bm{R}}^4$ that satisfy both inequalities 
{define} a polytope ${\bm{P}}  \subset I^4 \subset {\bm{R}}^4 $,
where $I=[-1,+1]$. Let us recall \bonlinecite{Boole1862}, \bonlinecite{Pitowsky2001}:

{
\begin{customthm} 
With $u$, $x$, $v$, and $y$ denoting sequences of numbers in $\{-1,1\}$, the inequality 
\begin{equation}\label{Boole} 
|\inner{u}{x} + \inner{v}{x}|+ |\inner{u}{y} - \inner{v}{y}| \leq 2 
\end{equation}
holds, provided that the limits defining the 4 correlations involved in \emph{(\ref{Boole})} exist.
\end{customthm}
}
 
Notice that we could apply Boole's lemma to all correlations associated  to the sequences
$a^{\circ\circ},b^{\circ\circ},a^{\circ\bullet },b^{\circ\bullet},$ $a^{\bullet\circ},b^{\bullet\circ},a^{\bullet\bullet}, b^{\bullet\bullet}$ obtained before. However only four of these correlation are used 
together.  Also, following a tradition that goes back to \bonlinecite{CHSH69}, we only consider correlations of the form {$\inner{a^{s_1\,t_1}}{b^{s_2\,t_2} }$}.  
Since all the sequences that we consider in these correlations have values in $\{-1,+1\}$, it follows from 
Boole's Lemma under standard convergence hypotheses that one has the \emph{CHSH inequalities} \bonlinecite{CHSH69}:
\begin{eqnarray}\label{CHSHeq}
|\inner{a^{\circ\circ}}{b^{\circ\circ}} + \inner{a^{\bullet\circ}}{b^{\circ\circ}}|+ 
|\inner{a^{\circ\circ}}{b^{\circ\bullet}} - \inner{a^{\bullet\circ}}{b^{\circ\bullet}}| \leq 2 \,\,\,\label{Ineq1c}  \,\,\,\,\,\,\,\,  \nonumber \\
|\inner{a^{\circ\circ}}{b^{\circ\circ}} - \inner{a^{\bullet\circ}}{b^{\circ\circ}}|+ 
|\inner{a^{\circ\circ}}{b^{\circ\bullet}} + \inner{a^{\bullet\circ}}{b^{\circ\bullet}}| \leq 2. \,\,\,\,\,\,\,\, \label{Ineq2c}
\end{eqnarray}

With the hypotheses that we have made so far, the correlations $\inner{a^{\bullet\circ}}{b^{\bullet\circ}}$,
$\inner{a^{\circ\circ}}{b^{\circ\bullet}}$, and $\inner{a^{\bullet\circ}}{b^{\circ\bullet}}$ are all unknown to us and three of the known correlations \eqref{eq:tmls} still cannot be used. 

In the next section, we will introduce a principle that allows us to relate the correlations appearing in the inequality \eqref{CHSHeq} {with} those of \eqref{eq:tmls}.  {Geometrically, the CHSH inequalities state that the $4$-tuple of correlations appearing in both of them must lie in the polytope $\bm{P}$ defined above.}

\section{The restricted effect-after-cause principle and its consequences}\label{sec:REACP}

In what follows we are dealing with observables whose measured values are registered so that changes of said values that would happen after measurement could be compared to the original values.

The \emph{restricted effect after cause principle} (\emph{REACP}) states that, \emph {for any Lorentz observer, the registered value of an observable cannot change because of further choices that are made after registration.}

\emph {In particular, the REACP states that, for any Lorentz observer, the registered observable value for a physical choice of angle cannot change because of further choices that are made after that.} 

We shall prove later (section \ref{sec:disc}) that the REACP is a direct consequence of SR.
When assuming the REACP we do not ask from CMR-dependent values the same that is asked from actual values that are generated by measurements because of the following simple observation (in the spirit of the title of a famous short paper by Peres \bonlinecite{Peres1993}): 

- \emph{What happens may have consequences but what could have happened (at the atomic scale at least) has no consequences (on the subsequent actual world).}  

- In particular,\emph{``Only an actual measurement by Alice may have an effect on the sequence measured by Bob, and \emph{vice versa}.''}

%new
\begin{rmk}
It is important to notice that, even in a Bell-type setting in which what is to be measured is picked randomly, the assumption of the REACP only applies 
to the value of the only physically measured quantity and not to the other possible `hidden values' of the underlying realistic theory. 
The fact of being measured, even when picked randomly, distinguishes the registered value from the rest and it is on this registered value that the 
REACP poses any constraints. From this perspective, the REACP is clearly weaker than standard locality, which involves all possible measurements. 
(See also Section \ref{sec:disc}.)
\end{rmk}

\begin{rmk}\label{toto}
The REACP is the EACP of \bonlinecite{Tresser1}, \bonlinecite{Tresser2} restricted to actual measurements.
The definition of the EACP seems excessively convoluted, something important here since 
we shall use a lot the specialization of that principle in the form of the REACP.   One may wish to replace the EACP by some lighter principle, such as, \emph{``a measurement value, be it factual or realism based, can only depend on what happens in the past cone of the measurement or of the more general observable value attribution"}.  
When using such a definition, one can compute the same correlations that one accesses when assuming locality.  
While some would say that locality is nevertheless a stronger statement that such a ``gentle EACP" (because locality tells us that no future cone 
event whatsoever can matter), it appears that, as far as applications to the EPRB setting are concerned, locality and such a ``gentle EACP" 
amount to the same sets of technical assumptions.  In particular, by using the right combination of Lorentz observers, one can give a meaning 
(and the value $\cos(\theta_{a'b'})$ that is implied by assuming Locality) to the correlation between the values that depend on the CMR on 
both Bob's and Alice's sides (the correlation that below we denote $\inner{a^{\bullet\circ}}{b^{\circ\bullet} }$).
\end{rmk}

We require that measurements for physical angles be Lorentz invariant; more precisely, we require that $a^{\circ\circ}, a^{\circ\bullet}, b^{\circ\circ}, b^{\bullet\circ}$ do not depend on the Lorentz observer (we do not ask anything for the other measurements). 

Recall now that in order to use inequalities (\ref{CHSHeq}) we need to compute $\inner{a^{\bullet\circ} }{b^{\circ\circ} }$, $\inner{a^{\circ\circ}}{b^{\circ\bullet}}$ and $\inner{a^{\bullet\circ}}{b^{\circ\bullet} }$, or at least we need to know enough about these correlations.

\begin{claim}
\label{conv1bis} The REACP implies that $b^{\bullet\circ} =b^{\circ\circ}$ {\rm and} $a^{\circ\bullet} =a^{\circ\circ}.$
\end{claim}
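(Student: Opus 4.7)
The plan is to exploit space-like separation of Alice's and Bob's measurement events together with the REACP, using a different Lorentz observer for each of the two asserted equalities. The strategy is: for $b^{\bullet\circ}=b^{\circ\circ}$ pick a frame in which Bob's measurement is registered before Alice makes any choice of angle, and for $a^{\circ\bullet}=a^{\circ\circ}$ pick a frame in which Alice's measurement is registered before Bob does. The REACP then forbids the already-registered value from depending on the later choice made by the other party, and Lorentz invariance of the physical-angle measurements (stated just before the claim) lifts the frame-specific equality to an unconditional one.

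More concretely, for $b^{\bullet\circ}=b^{\circ\circ}$ I would argue as follows. For each index $i$, let $E_{A,i}$ and $E_{B,i}$ be the measurement events on Alice's and Bob's sides for the pair $(p_i,\overline{p}_i)$. By hypothesis these two events are space-like separated, so there exists a Lorentz frame $F_B$ in which $E_{B,i}$ strictly precedes $E_{A,i}$. In $F_B$, Bob's outcome at his physical angle $\theta_b$ is registered before Alice has committed to either her physical angle $\theta_a$ or her counterfactual angle $\theta_{a'}$. Since Bob's measurement is a genuine physical measurement, the ``R'' in REACP is satisfied, and the principle applies to forbid Bob's registered value from changing as a consequence of Alice's later choice. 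Hence, within $F_B$, the $i$-th entry of $b^{\circ\circ}$ (corresponding to Alice picking $\theta_a$ at $E_{A,i}$) equals the $i$-th entry of $b^{\bullet\circ}$ (corresponding to Alice picking $\theta_{a'}$). Because both $b^{\circ\circ}$ and $b^{\bullet\circ}$ are assumed Lorentz invariant, the pointwise equality $b^{\circ\circ}_i=b^{\bullet\circ}_i$ holds in every frame, and thus $b^{\bullet\circ}=b^{\circ\circ}$ as sequences.

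The second equality $a^{\circ\bullet}=a^{\circ\circ}$ is obtained by a symmetric argument: since $E_{A,i}$ and $E_{B,i}$ are space-like separated, there is also a Lorentz frame $F_A$ in which $E_{A,i}$ precedes $E_{B,i}$. Apply REACP in $F_A$ to Alice's registered value at her physical angle $\theta_a$ to conclude that this value cannot depend on Bob's subsequent selection between $\theta_b$ and $\theta_{b'}$; then invoke Lorentz invariance of $a^{\circ\circ}$ and $a^{\circ\bullet}$ to remove the frame dependence.

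The main obstacle is conceptual rather than technical: one must be clear that, for each pair, Alice's and Bob's choices of orientation can be considered as events in a sufficiently small neighborhood of $E_{A,i}$ and $E_{B,i}$ respectively, so that space-like separation of the measurement events indeed guarantees the existence of the two Lorentz frames used above (one placing each measurement in the other's past). Once this is granted the argument is essentially a direct application of the REACP, together with the Lorentz-invariance hypothesis imposed on the physical-angle sequences.
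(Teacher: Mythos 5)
Your proposal is correct and follows essentially the same route as the paper's proof: for each equality you choose a Lorentz observer placing the relevant physical measurement before the other party's choice, apply the REACP to the registered value, and use the stated Lorentz invariance of the physical-angle sequences to remove the frame dependence. No gaps.
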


\begin{proof}
This claim follows from choosing different Lorentz observers, as we now explain. 

- (i) Consider one Lorentz observer for which the measurements made by Bob happen before the corresponding ones made by Alice. The REACP tells us directly that no change later performed by Alice can affect the measurement done by Bob, meaning that his measurement $b^{\bullet\circ}=B(\theta_{a'},\theta_b)$ only depends on the physical angle $\theta_b$ chosen by him. Namely, we get that $B(\theta_{a'},\theta_b)$ is independent of $\theta_{a'}$ and, since we assume that this measurement is Lorentz invariant, that  $b^{\bullet\circ}=B(\theta_b) =b^{\circ\circ}$.  

- (ii) Considering instead another observer for whom Alice measures before Bob on corresponding particles, we get $a^{\circ\bullet} =a^{\circ\circ}$ \emph{mutatis mutandis}. \end{proof}

\begin{corollary} \label{Coro} 
Assuming QM and CMR it follows from the REACP and equations (\ref{eq:tmls}) that $\inner{a^{\bullet\circ}}{b^{\circ\circ}}=\inner{a^{\bullet\circ}}{b^{\bullet\circ}}=-\cos(\theta_{a'b})$ 
and that $\inner{a^{\circ\circ}}{b^{\circ\bullet}}= \inner{a^{\circ\bullet}}{b^{\circ\bullet}}=-\cos(\theta_{ab'}).$
\end{corollary}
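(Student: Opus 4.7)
The plan is to combine Claim \ref{conv1bis} (the sequence-level identities $b^{\bullet\circ}=b^{\circ\circ}$ and $a^{\circ\bullet}=a^{\circ\circ}$) with the twisted Malus law entries of equation~(\ref{eq:tmls}). The observation is that once the REACP has collapsed the ``measured-side'' sequences as in Claim~\ref{conv1bis}, the mixed correlations appearing in the CHSH inequalities become equal, term by term, to correlations we already know.

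Concretely, I would first handle the identity $\inner{a^{\bullet\circ}}{b^{\circ\circ}}=\inner{a^{\bullet\circ}}{b^{\bullet\circ}}$. By Claim~\ref{conv1bis} the two sequences $b^{\circ\circ}$ and $b^{\bullet\circ}$ coincide as sequences, so replacing one by the other inside the inner product leaves the correlation unchanged; then (\ref{eq:tmls}) supplies the value $-\cos(\theta_{a'b})$ for $\inner{a^{\bullet\circ}}{b^{\bullet\circ}}$. The second assertion is dual: by Claim~\ref{conv1bis} we have $a^{\circ\bullet}=a^{\circ\circ}$, so $\inner{a^{\circ\circ}}{b^{\circ\bullet}}=\inner{a^{\circ\bullet}}{b^{\circ\bullet}}$, and the latter equals $-\cos(\theta_{ab'})$ by (\ref{eq:tmls}).

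There is essentially no obstacle: the whole content is a straightforward substitution, with the genuine physical work already done in Claim~\ref{conv1bis}. The only point deserving a brief comment is that one is allowed to substitute sequences inside the limiting correlation $\inner{\cdot}{\cdot}$; this is immediate from the definition $\inner{u}{v}=\lim_{n\to\infty}\frac{1}{n}\sum_{i=1}^n u_iv_i$, since identifying two sequences term by term identifies all partial sums. Thus a short two-line derivation for each of the two equalities suffices, and the corollary follows.
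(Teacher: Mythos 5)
Your proof is correct and is exactly the argument the paper intends: the corollary is stated as an immediate consequence of Claim \ref{conv1bis} (substituting $b^{\bullet\circ}=b^{\circ\circ}$ and $a^{\circ\bullet}=a^{\circ\circ}$ inside the correlations) together with the twisted Malus law values from (\ref{eq:tmls}). The paper gives no separate proof, and your two-line substitution, including the remark that term-by-term equality of sequences identifies the limiting correlations, is precisely what is implicit there.
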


\section{Main claims}
We shall prove some weak regularity properties of the correlation $\inner{a^{\bullet\circ}}{ b^{\circ\bullet}}$ as a function of certain angle quadruplets.  We shall also prove that we get a contradiction if we further make the mild assumption that this function is a bit more regular.

Let us set ${\bm{T}}\equiv [-\pi,\pi]/(-\pi \sim \pi)$ and $I\equiv[-1,1]$. There is a function ${\cal{C}}: {\bm{T}}^4 \to I^4$ which represents 
how the following correlations depend on choices of quadruplets of angles $\overline{\theta} = (\theta_a,\theta_{a'},\theta_b,\theta_{b'})$ 
in ${\bm{T}}^4.$  We have:
\begin{align*}
&{\cal{C}}(\overline{\theta}) = (\inner{a^{\circ\circ}}{ b^{\circ\circ}}, \inner{a^{\bullet\circ}}{b^{\circ\circ}}, \inner{a^{\circ\circ}}{b^{\circ\bullet}}, 
\inner{a^{\bullet\circ}}{b^{\circ\bullet}})\\
&= (-\cos(\theta_{ab}),-\cos(\theta_{a'b}),-\cos(\theta_{ab'}),\inner{a^{\bullet\circ}}{b^{\circ\bullet}}(\overline{\theta})).
\end{align*}
Observing that $(\theta_{ab}+\theta_{a'b'})-(\theta_{ab'}+\theta_{a'b})=0$ we consider the variables 
$\theta_1, \theta_2, \theta_3,$ and $\theta_4$ defined as:
\[ 
\theta_1=\theta_{ab},\,\, \theta_2=\theta_{a'b},\,\, \theta_3=\theta_{ab'},\,\,\theta_4=\theta_1-(\theta_2+\theta_3).
\]
Let us denote by $\bm{\theta}$ the triplet of angle differences $(\theta_{ab}, \theta_{a'b}, \theta_{ab'})$, so that  
$\bm{\theta}= (\theta_1,\theta_2,\theta_3)$.  Also, let:
\[
{\cal{Q}}_+ = \{\bm{\theta} = (\theta_1,\theta_2,\theta_3)\in {\bm{R}}^3 : 0\leq   \theta_i \leq \pi, i=1,2,3\}\,,
\]
and consider the function $F_4 : {\cal{Q}}_+ \to {\bm{R}}$ defined by  $F_4(\bm{\theta})=\inner{a^{\bullet\circ}}{b^{\circ\bullet}}$.

The CHSH inequalities \eqref{Ineq2c} then become
\begin{eqnarray}
 |-\cos(\theta_1) - \cos(\theta_2)|+|-\cos(\theta_3) - F_4(\bm{\theta})|&\leq& 2 \nonumber \\
 \label{eq:ineq_cos} |-\cos(\theta_1) + \cos(\theta_2)|+|-\cos(\theta_3) + F_4(\bm{\theta})|&\leq& 2,
\end{eqnarray}
for the unknown function $F_4$.

\begin{prop}\label{prop:diff_0}
 Assume $F_4$ is a function that satisfies \eqref{eq:ineq_cos}.  Then $F_4$ is differentiable at ${\bm{0}}=(0,0,0)$  and $DF_4({\bm{0}}) = (0,0,0)$.
\end{prop}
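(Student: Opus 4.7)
The plan is to exploit the \emph{first} of the two inequalities in (\ref{eq:ineq_cos}) (the one with $-\cos\theta_3 - F_4$) because at $\bm\theta = \mathbf{0}$ its cosine-only term already equals $2$, so the inequality is saturated and any small perturbation forces $F_4$ to move in a controlled way. The second inequality will play no role: its cosine-only term vanishes to second order at the origin, so it imposes no useful constraint there.

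First I would evaluate at $\bm\theta = \mathbf{0}$: since $\cos 0 = 1$, the first inequality reduces to $2 + |{-}1 - F_4(\mathbf{0})| \le 2$, which pins down $F_4(\mathbf{0}) = -1$. Next, I would restrict attention to a neighborhood of the origin inside $\mathcal{Q}_+$ where $\theta_i < \pi/2$, so that $\cos\theta_1, \cos\theta_2 > 0$ and the first absolute value simplifies to $\cos\theta_1 + \cos\theta_2$. Rearranging, the first inequality becomes
\[
|F_4(\bm\theta) + \cos\theta_3| \;\le\; (1-\cos\theta_1) + (1-\cos\theta_2).
\]
Using the elementary bound $1-\cos\theta \le \theta^2/2$, this yields $|F_4(\bm\theta) + \cos\theta_3| \le (\theta_1^2 + \theta_2^2)/2$.

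Finally, applying the triangle inequality together with $|1 - \cos\theta_3| \le \theta_3^2/2$, I would conclude
\[
|F_4(\bm\theta) - F_4(\mathbf{0})| \;=\; |F_4(\bm\theta) + 1| \;\le\; |F_4(\bm\theta) + \cos\theta_3| + |1 - \cos\theta_3| \;\le\; \tfrac{1}{2}|\bm\theta|^2.
\]
This quadratic control immediately gives $\lim_{\bm\theta \to \mathbf{0}} |F_4(\bm\theta) - F_4(\mathbf{0})|/|\bm\theta| = 0$, so $F_4$ is (one-sidedly, since $\mathbf{0}$ is a corner of $\mathcal{Q}_+$) differentiable at the origin with $DF_4(\mathbf{0}) = (0,0,0)$, as claimed.

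There is really no serious obstacle here; the proof is essentially a one-line estimate once one selects the correct inequality. The only subtle point worth remarking on is conceptual: among the two CHSH inequalities (\ref{eq:ineq_cos}), exactly one is saturated at $\bm\theta = \mathbf{0}$, and it is precisely that saturation that transfers into a second-order pinching of $F_4(\bm\theta) + 1$. The other inequality is slack at the origin to second order and would not yield the conclusion.
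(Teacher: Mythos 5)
Your proof is correct and follows essentially the same route as the paper: both isolate the first CHSH inequality near the origin to obtain the bound $|F_4(\bm{\theta})+1|\leq 3-\cos\theta_1-\cos\theta_2-\cos\theta_3$ and then note that this is $o(\|\bm{\theta}\|)$. Your version merely spells out the intermediate steps (sign of the first absolute value near $\bm{0}$, the bound $1-\cos\theta\leq\theta^2/2$) that the paper leaves implicit.
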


\begin{proof}
From (\ref{eq:ineq_cos}) and the definition of $F_4$ it follows that $F_4(\bm{0}) = -1$.  Moreover, for $\bm{\theta}$ close to ${\bm{0}}$,  the CHSH inequalities imply:
$
|F_4(\bm{\theta}) + 1| \leq 3 - \cos\theta_1 - \cos\theta_2 - \cos\theta_3\,.
$
Dividing by the norm of ${\bm{\theta}}$, we get a quotient that clearly goes to zero as $\bm{\theta} \to {\bm{0}}$.
\end{proof}

\begin{rmk}
Following the proof above, it is easy to infer the more general result that $F_4$ is differentiable at any point $\theta \bm{e}_i\in {\cal{Q}}_+$ (with $\bm{e}_i \in \bm{R}^3, i=1,2,3$ the standard basis vectors) with $DF_4|_{\theta \bm{e}_i}= \sin(\theta) \bm{e}_i$. Moreover, one can also show that the second order variation $|F_4(\bm{\theta})+1|/\|\bm{\theta}\|^2$ remains bounded as  $\bm{\theta}\to {\bm 0}$.
\end{rmk}

In terms of the physical setting, we have shown:
\begin{theorem} \label{diff0}
Suppose {\emph{[}}QM+CMR+REACP{\emph{]}}. Then, $\inner{a^{\bullet\circ}}{b^{\circ\bullet}}$ is differentiable at ${\bm{0}}=(0,0,0)$ with zero differential.
\end{theorem}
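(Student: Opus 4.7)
The plan is to deduce Theorem \ref{diff0} as a physical translation of Proposition \ref{prop:diff_0}, with the REACP playing the essential role of reducing three of the four correlations in the CHSH inequalities to known cosines.

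First I would identify the function of interest. Under [QM+CMR+REACP], Corollary \ref{Coro} gives
\[
\inner{a^{\bullet\circ}}{b^{\circ\circ}} = -\cos(\theta_{a'b}), \qquad \inner{a^{\circ\circ}}{b^{\circ\bullet}} = -\cos(\theta_{ab'}),
\]
while the twisted Malus law supplies $\inner{a^{\circ\circ}}{b^{\circ\circ}} = -\cos(\theta_{ab})$. Only $\inner{a^{\bullet\circ}}{b^{\circ\bullet}}$ is left undetermined among the four correlations that appear together in the CHSH inequalities (\ref{CHSHeq}). Writing $F_4(\bm{\theta}) := \inner{a^{\bullet\circ}}{b^{\circ\bullet}}$ as a function of $\bm{\theta} = (\theta_{ab}, \theta_{a'b}, \theta_{ab'}) \in {\cal Q}_+$ (noting that only these three angle differences enter the CHSH inequalities once the fourth difference $\theta_{a'b'}$ is removed via the linear dependence $\theta_{ab}+\theta_{a'b'} = \theta_{ab'}+\theta_{a'b}$), one checks that $\bm{\theta} = \bm{0}$ corresponds to the physically realizable configuration $\theta_a = \theta_{a'} = \theta_b = \theta_{b'}$, and that a neighborhood of $\bm{0}$ in ${\cal Q}_+$ is attained by varying the four angles freely.

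Next I would invoke Boole's lemma. Since all of $a^{\circ\circ}, a^{\bullet\circ}, b^{\circ\circ}, b^{\circ\bullet}$ are sequences in $\{-1,+1\}$ (the existence of the non-measured ones being guaranteed by CMR), the four correlations above satisfy (\ref{CHSHeq}). Substituting the three cosines from the previous step yields exactly the system (\ref{eq:ineq_cos}) for $F_4$. At this stage, the hypothesis of Proposition \ref{prop:diff_0} is satisfied, and applying that proposition gives differentiability of $F_4$ at $\bm{0}$ with $DF_4(\bm{0}) = (0,0,0)$. Translating back to the correlation $\inner{a^{\bullet\circ}}{b^{\circ\bullet}}$, this is precisely the statement of Theorem \ref{diff0}.

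The only genuine subtlety, which I would address carefully, is that $F_4$ must truly be a function of the three angle \emph{differences}, so that differentiability in $\bm{\theta}$ is a meaningful statement; this requires that the correlation of the $\pm 1$-sequences $a^{\bullet\circ}$ and $b^{\circ\bullet}$ depend on the four underlying physical angles only through the entries of $\bm{\theta}$. This is compatible with the rotational symmetry already encoded in the twisted Malus law and is implicit in the paper's setup via Conventions 1 and 2. Once this well-definedness is granted, the remainder of the argument is a direct application of the preceding proposition, with no further calculation required.
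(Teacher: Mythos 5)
Your proposal is correct and follows essentially the same route as the paper: the authors also obtain Theorem \ref{diff0} by combining Corollary \ref{Coro} (which pins down the three known correlations via the REACP), Boole's lemma in the form of the CHSH inequalities \eqref{eq:ineq_cos}, and Proposition \ref{prop:diff_0}. Your added remark about $\inner{a^{\bullet\circ}}{b^{\circ\bullet}}$ depending only on the three angle differences makes explicit a point the paper leaves implicit in its definition of $F_4$, but it does not change the argument.
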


We now show that, assuming mild further regularity of $F_4$ at ${\bm{0}}\in{\cal{Q}}_+$, a Bell-type contradiction arises. 
For that, we first need a short lemma that follows from QM and CMR.

\begin{lemma}\label{lma:diag}
 Assuming {\emph{[}}QM+CMR{\emph{]}}, then $F_4(\theta,\theta,\theta)=-\cos(\theta)$.
\end{lemma}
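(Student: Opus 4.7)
The plan is to exploit the fact that on the diagonal $\theta_1=\theta_2=\theta_3=\theta$, the counterfactual angles collapse onto the physical ones, so the correlation $F_4$ reduces to the standard physical correlation $\langle a^{\circ\circ},b^{\circ\circ}\rangle$ to which QM directly applies.

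More concretely, I would first unpack what $\bm{\theta}=(\theta,\theta,\theta)$ means in terms of the underlying angles $(\theta_a,\theta_{a'},\theta_b,\theta_{b'})$. Since $\theta_1=\theta_{ab}$, $\theta_2=\theta_{a'b}$ and $\theta_3=\theta_{ab'}$, setting all three equal to $\theta$ yields $\theta_a-\theta_b=\theta_{a'}-\theta_b=\theta_a-\theta_{b'}=\theta$, which immediately forces $\theta_{a'}=\theta_a$ and $\theta_{b'}=\theta_b$. In other words, on the diagonal the ``alternative'' angles coincide with the actually chosen angles.

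Next I would invoke Convention~1 (the CMR convention), which stipulates that counterfactually-defined quantities take the values they would take if they had been jointly measured with the physical ones. Applied to the case $\theta_{a'}=\theta_a$, this says that the sequence $a^{\bullet\circ}=A(\theta_{a'},\theta_b)$ agrees with $a^{\circ\circ}=A(\theta_a,\theta_b)$; similarly $\theta_{b'}=\theta_b$ gives $b^{\circ\bullet}=b^{\circ\circ}$. Therefore
\[
F_4(\theta,\theta,\theta)=\langle a^{\bullet\circ},b^{\circ\bullet}\rangle=\langle a^{\circ\circ},b^{\circ\circ}\rangle.
\]
Finally, Convention~2 together with the twisted Malus law (the first identity of \eqref{eq:tmls}) gives $\langle a^{\circ\circ},b^{\circ\circ}\rangle=-\cos(\theta_{ab})=-\cos\theta$, completing the proof.

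There is no real obstacle here: the lemma is essentially a consistency check on the counterfactual formalism. The only subtle point worth spelling out explicitly is that the identification $a^{\bullet\circ}=a^{\circ\circ}$ (rather than merely equal correlations) relies on Convention~1 and not on any form of locality or on the REACP. This is precisely why the lemma is stated under the weakest pair of hypotheses [QM+CMR] alone.
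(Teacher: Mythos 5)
Your proposal is correct and follows essentially the same route as the paper: deduce $\theta_{a'}=\theta_a$ and $\theta_{b'}=\theta_b$ from $\theta_1=\theta_2=\theta_3$, conclude that the counterfactual sequences coincide with the physical ones because the CMR-given functions $A$ and $B$ depend only on the angle pair, and finish with the twisted Malus law. Your explicit remark that the identification of sequences uses only the CMR formalism (and not the REACP) is exactly the point the paper relies on in stating the lemma under [QM+CMR] alone.
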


\begin{proof}
By direct computation using the definitions, $\theta_i = \theta \ \forall i=1,2,3$ implies that $\theta_{a'}=\theta_a$ and $\theta_{b'}=\theta_b$.  So, in such case, we have $a^{\bullet\circ}=a^{\circ\circ}$ and $b^{\circ\bullet}=b^{\circ\circ}$, so that $\inner{a^{\bullet\circ}}{b^{\circ\bullet}}=\inner{a^{\circ\circ}}{b^{\circ\circ}}=-cos(\theta)$ by the twisted Malus Law.
\end{proof}

\begin{prop}\label{prop:3=1}
Assume that $F_4$ is a function satisfying \eqref{eq:ineq_cos}, that $F_4(\theta,\theta,\theta)=-\cos(\theta)$ and also that $F_4$ is twice differentiable at $\bm{\theta} = {\bm{0}}$. Then, it follows that $3=1$.
\end{prop}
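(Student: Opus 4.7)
The plan is as follows: exploiting twice-differentiability at $\bm 0$, we linearize the CHSH inequalities \eqref{eq:ineq_cos} around the origin into a quadratic sandwich for $F_4$, pin down every entry of the Hessian $H = D^2 F_4(\bm 0)$, and then compare the two expressions for the second-order behaviour of $F_4$ along the diagonal direction $(\theta,\theta,\theta)$ — one coming from Lemma \ref{lma:diag}, the other from the Hessian we have just computed — to obtain $3=1$.

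By Proposition \ref{prop:diff_0} we have $F_4(\bm 0)=-1$ and $DF_4(\bm 0)=0$, so twice-differentiability provides
\[
F_4(\bm\theta) \;=\; -1 + Q(\bm\theta) + o(\|\bm\theta\|^2),\qquad Q(\bm\theta) \;=\; \tfrac12 \bm\theta^T H \bm\theta.
\]
Set $G(\bm\theta)=F_4(\bm\theta)+1 \ge 0$. For $\bm\theta$ close to $\bm 0$ both $\cos\theta_1+\cos\theta_2$ and $\cos\theta_3-F_4(\bm\theta)$ are strictly positive, so the two CHSH inequalities open up to the three constraints
\[
G(\bm\theta)\le 3-\cos\theta_1-\cos\theta_2-\cos\theta_3 \quad\text{(first CHSH, upper bound),}
\]
\[
G(\bm\theta)\ge\cos\theta_1+\cos\theta_2-\cos\theta_3-1 \quad\text{(first CHSH, lower bound),}
\]
\[
G(\bm\theta)\ge|\cos\theta_1-\cos\theta_2|-(1-\cos\theta_3) \quad\text{(second CHSH).}
\]
A second-order Taylor expansion and the $2$-homogeneity of $Q$ (absorbing the $o(\|\bm\theta\|^2)$ error after restricting to a ray and rescaling) promote these to the pointwise quadratic sandwich
\[
\max\!\Bigl\{\,0,\ \tfrac12|\theta_1^2-\theta_2^2|-\tfrac{\theta_3^2}{2},\ \tfrac12(\theta_3^2-\theta_1^2-\theta_2^2)\,\Bigr\}\ \le\ Q(\bm\theta)\ \le\ \tfrac12(\theta_1^2+\theta_2^2+\theta_3^2),
\]
valid for all $\bm\theta\in\mathcal{Q}_+$.

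From this sandwich we extract every entry of $H$. Along each coordinate axis $\bm\theta=t\bm e_i$ the upper and lower bounds coincide at $t^2/2$, forcing $H_{11}=H_{22}=H_{33}=1$. In the plane $\bm\theta=(s,t,0)$ with $s>t>0$, the upper bound rearranges to $H_{12}st\le 0$, hence $H_{12}\le 0$, while the lower bound $(s^2-t^2)/2$ rearranges to $H_{12}s+t\ge 0$, hence $H_{12}\ge -t/s$; letting $t/s\to 0^+$ yields $H_{12}=0$. The analogous argument in the $(\theta_1,\theta_3)$- and $(\theta_2,\theta_3)$-planes gives $H_{13}=H_{23}=0$.

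Finally, Lemma \ref{lma:diag} forces $Q(\theta,\theta,\theta)=\theta^2/2$, whereas the Hessian we just computed gives
\[
Q(\theta,\theta,\theta)\;=\;\tfrac{\theta^2}{2}\sum_{i,j}H_{ij}\;=\;\tfrac{\theta^2}{2}\bigl(3+0\bigr)\;=\;\tfrac{3\theta^2}{2}.
\]
Equating the two expressions yields $3=1$. The genuinely delicate point is the extraction of the off-diagonal entries $H_{ij}$: restriction to the axes constrains only the diagonal of $H$, so one must enter the interior of the $2$D coordinate faces of $\mathcal{Q}_+$ and apply an asymmetric limit $t/s\to 0^+$ to squeeze them out. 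Verifying that the $o(\|\bm\theta\|^2)$ sandwich on $G$ upgrades to an exact pointwise sandwich on $Q$ is the other essential but standard step, made possible by the $2$-homogeneity of $Q$.
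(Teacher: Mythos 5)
Your proof is correct and follows essentially the same route as the paper's: the diagonal Hessian entries are pinned to $1$ by the coincidence of the upper and lower CHSH bounds along the coordinate axes, the off-diagonal entries are killed by an asymmetric limit inside the coordinate planes (your $(s,t,0)$ with $t/s\to 0^+$ is the paper's $(0,x,kx)$ with $k\to\infty$), and the contradiction comes from comparing the resulting $Q(\theta,\theta,\theta)=\tfrac{3}{2}\theta^2$ with the value $\tfrac{1}{2}\theta^2$ forced by Lemma \ref{lma:diag}. Your $2$-homogeneity argument for discarding the $o(\|\bm{\theta}\|^2)$ error makes rigorous what the paper only sketches (``the general case follows readily''), and the stray, unjustified assertion $G\ge 0$ is harmless since that lower bound is never the binding one.
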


\begin{proof}
We illustrate the main point of the proof by assuming that 
$F_4$ is a degree $2$ polynomial near $\bm{\theta} = {\bm{0}}$, whence
$
F_4{(\bm{\theta})}=-1 + \sum_i c_i \theta_i + \sum_{i\leq j} c_{ij} \theta_i \theta_j\,,
$ 
(the general case follows readily by using the elementary theory of Taylor polynomials.)  We accordingly replace the function $-\cos (x)$ by its order-two approximating polynomial $-1 + \frac{1}{2} x^2$.  By Proposition \ref{prop:diff_0}, all linear coefficients must vanish: $c_i = 0$ for $i=1,2,3$.  Next, it is not hard to check that if $|z| \leq 1$, $|w| \leq 1$ and $| 1 \pm z| + |1 \mp w| \leq 2$ then $z =w$ ({\it cf.\/} \bonlinecite{dFT}).  In this way, we thus obtain that $c_{11} = \frac{1}{2}$ by considering $\bm{\theta} = (x,0,0)$.  Analogously,  we obtain $c_{22} = c_{33} = \frac{1}{2}$. Now, taking $\bm{\theta} = (x,x,x)$ and noting that $F_4$ coincides with the cosine on the diagonal, we have: 
\begin{eqnarray}\label{cont}
\frac{1}{2} + \frac{1}{2} + \frac{1}{2} + \sum_{i < j} c_{ij} = \frac{1}{2}.
\end{eqnarray}
To finish the proof, we show that $c_{ij} = 0$ when $i < j$.   For that purpose, consider the second order terms in the CHSH inequalities, taking $\bm{\theta} = (0, x , k x)$ (with $x$ small and $kx$ also small but fixed); we get that for all $k$:
\[
{2 - \frac{1}{2} x^2 + \left|\frac{1}{2} x^2 + c_{23} k x^2\right| \leq 2 + o(x^2)\ .}  
\]
{Canceling the $2$ from both sides}, dividing by $x^2$ and taking $k$ arbitrarily large, we see that the inequality holds only when $c_{23}=0$.  The equalities $c_{13} = 0 = c_{23}$ are shown in the same way. Therefore, equation (\ref{cont}) implies that $3=1$. 
\end{proof}

\begin{rmk}
 It is interesting to notice that a similar argument can be applied for other angle configurations $\bm{\theta}\neq \bm{0}$ 
which correspond through $\cal{C}$ to vertices of the polytope $\bm{P}$ defined by the inequalities.
\end{rmk}

Our main result now follows by combining Lemma \ref{lma:diag} and Proposition \ref{prop:3=1}:
\begin{theorem}\label{maint}
Suppose {\emph{[}}QM+CMR+REACP{\emph{]}} and that $\inner{a^{\bullet\circ}}{b^{\circ\bullet}}$ is twice differentiable at  $\bm{\theta} = {\bm{0}}$ as a function restricted to  ${\cal{Q}}_+$.  Then, it would follow that $3=1.$
\end{theorem}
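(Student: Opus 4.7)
My plan is to reduce Theorem \ref{maint} to the already-proved Proposition \ref{prop:3=1} by verifying that its three hypotheses hold in the present setting. First, under [QM+CMR+REACP], Corollary \ref{Coro} gives explicit cosine formulas for three of the four correlations entering the CHSH inequalities \eqref{CHSHeq}, so the remaining correlation, which is precisely $F_4(\bm{\theta})$ on ${\cal{Q}}_+$, automatically satisfies the inequality pair \eqref{eq:ineq_cos} because Boole's lemma applies to sequences of $\pm 1$. Second, Lemma \ref{lma:diag} uses only [QM+CMR] to yield $F_4(\theta,\theta,\theta)=-\cos\theta$. Third, the twice-differentiability hypothesis at $\bm{0}$ supplied by the theorem is exactly the remaining assumption Proposition \ref{prop:3=1} needs. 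Invoking that proposition then delivers $3=1$.

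To convince myself that Proposition \ref{prop:3=1} really does the required work, I would rehearse its logic in the general (non-polynomial) case. Twice differentiability lets me write a second-order Taylor expansion of $F_4$ at $\bm{0}$; by Theorem \ref{diff0} the linear part vanishes and $F_4(\bm{0})=-1$, so only quadratic coefficients $c_{ij}$ survive. Feeding the axial slices $\bm{\theta}=x\bm{e}_i$ into \eqref{eq:ineq_cos} and using the saturation property (if $|z|,|w|\leq 1$ and $|1\pm z|+|1\mp w|\leq 2$ then $z=w$) pins each diagonal coefficient to $c_{ii}=\tfrac{1}{2}$. To kill the cross terms, I would test the inequalities along two-parameter slivers such as $\bm{\theta}=(0,x,kx)$ and its permutations, letting $k$ become large while $x\to 0$: any nonzero off-diagonal $c_{ij}$ forces a violation at order $x^2$. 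Finally, Lemma \ref{lma:diag} evaluated on the diagonal gives $\sum_{i\leq j}c_{ij}=\tfrac{1}{2}$, which combined with the three values $c_{ii}=\tfrac{1}{2}$ yields $\tfrac{3}{2}=\tfrac{1}{2}$, i.e.\ $3=1$.

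The main subtlety, which I expect to be the only genuine obstacle, is extending the illustrative polynomial calculation in Proposition \ref{prop:3=1} to a general twice-differentiable $F_4$: the Taylor remainder $o(\|\bm{\theta}\|^2)$ must be controllable simultaneously on the axes and on the slivers $(0,x,kx)$ where $k$ is held fixed before taking $x\to 0$. Since fixing $k$ first keeps $\|\bm{\theta}\|^2$ comparable to $x^2$, the remainder stays $o(x^2)$ and the limit argument goes through, but this needs to be stated carefully rather than waved through. Everything else is really bookkeeping on top of the technical core already supplied by Propositions \ref{prop:diff_0}, \ref{prop:3=1} and Lemma \ref{lma:diag}.
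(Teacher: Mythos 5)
Your proposal is correct and follows essentially the same route as the paper: the paper likewise proves Theorem \ref{maint} by combining Corollary \ref{Coro} (which, via the REACP, fixes three of the four correlations so that $F_4$ satisfies \eqref{eq:ineq_cos}), Lemma \ref{lma:diag}, and Proposition \ref{prop:3=1}. Your careful remark about controlling the Taylor remainder on the slivers $(0,x,kx)$ is a point the paper glosses over with ``the general case follows readily,'' but it is the same argument.
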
 

Assuming one extra degree of differentiability for a correlation that we  prove  differentiable would appear \emph{a priori} much weaker than assuming the first degree of differentiability.  Since, as we shall see in \S \ref{sec:disc}, REACP is a consequence of SR, Theorem \ref{maint} tells us that either CMR is false or the correlation $\inner{a^{\bullet\circ}}{b^{\circ\bullet}}$ is twice differentiable. 

The following three remarks are formulated under the assumption that CMR holds true. 

\begin{rmk}
Notice that Proposition \ref{prop:3=1} implies a Bell-type contradiction whenever one can be sure that $\inner{a^{\bullet\circ}}{b^{\circ\circ}}$ and $\inner{a^{\circ\circ}}{b^{\circ\bullet}}$ are given by the twisted Malus law. In this paper, we showed that the REACP leads to this conclusion and it would be interesting to find other such hypotheses.
\end{rmk}

\begin{rmk}
Since Theorem \ref{maint} only requires functional regularity when restricted to ${\cal{Q}}_+$, it also holds when the correlation $\inner{a^{\bullet\circ}}{b^{\circ\bullet}}$ or its derivatives are allowed to jump as some of the $\theta_i$'s ($i=1,2,3$) changes sign; in particular, Theorem \ref{maint} can be applied to correlations that depend on absolute values of corresponding angle differences. 
\end{rmk}

\begin{rmk}
An example of a correlation function that escapes the contradiction of Theorem \ref{maint} is given by $\inner{a^{\bullet\circ}}{b^{\circ\bullet}}$ having the value $-cos(\theta)$ when $\theta_i=\theta \ \forall i=1,2,3$ and $-cos(\theta_1)cos(\theta_2)cos(\theta_3)$ otherwise.  As in this example, a function escaping the contradiction must have 'jumps' when varying the direction of the vector $\bm{\theta}\in \bm{R}^3$.
\end{rmk}

\section{The REACP and Special Relativity}\label{sec:disc}

We first show that the REACP is weaker than SR, something that we could not prove (so far) for the effect after cause principle of \bonlinecite{Tresser1} and \bonlinecite{Tresser2}. 

\begin{claim}\label{REACPWeaker} The REACP is implied by SR, and more specifically by the impossibility of sending superluminal messages.\end{claim}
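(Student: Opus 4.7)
The plan is to establish the claim by contraposition: I assume REACP fails and derive the possibility of superluminal messaging, contradicting the stated SR hypothesis. So suppose that there exist an inertial observer $O$, an event $E$ at which an observable value $v$ is physically registered, and a later event $E'$ (later in $O$'s frame) at which an agent makes a choice on which $v$ genuinely depends. The operative word in the formulation of the REACP is \emph{registered}: the value at $E$ is, by hypothesis, already encoded in a physical record, and therefore it can be read by any bystander whose worldline passes through a small neighborhood of $E$ lying in the causal future of $E$.

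The argument is then simply that this setup constitutes a signaling protocol from $E'$ to $E$. The agent at $E'$ selects between two possible choices, and by assumption this produces two different values of the record at $E$. A bystander stationed at the detector reads the record essentially at the time $t_O(E)$ in $O$'s frame, while the choice is made at $t_O(E') > t_O(E)$, so in $O$'s frame a classical one-bit message is transmitted from a later event to an earlier one. In the Bell-type configurations of interest here, $E$ and $E'$ are space-like separated, so this backwards-in-time transmission in $O$'s frame is literally a superluminal message in that same frame. In the remaining case, where $E'$ lies in the time-like future of $E$, one reduces to the space-like case by the standard trick of choosing a second inertial observer $O'$ for whom the role of $E$ and $E'$ is rearranged, so that the transmission becomes superluminal in $O'$'s frame. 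Either way we contradict the impossibility of sending superluminal messages, and the claim follows.

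The step I expect to require most care is making the notion of ``changing the registered value'' unambiguous: the change is counterfactual, in the sense that varying the choice made at $E'$ produces counterfactually distinct contents of the record at $E$, and it is the classical nature of the record that lets one treat this counterfactual dependence as an honest bit-by-bit signaling channel. It is worth emphasizing that this is precisely the place where the \emph{restriction} in REACP enters: the argument crucially uses that the observable at $E$ has been measured and registered, so that an operational read-out is available. Unregistered CMR-dependent values carry no physical record and therefore support no signaling protocol, which is exactly why the full EACP of \bonlinecite{Tresser1,Tresser2} cannot be derived from ``no superluminal messaging'' by the above reasoning, while its restriction to actually measured observables can.
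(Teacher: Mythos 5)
Your proposal follows the same route as the paper: argue by contraposition that a failure of the REACP turns the dependence of a \emph{registered} value on a later choice into a backward-in-time, hence superluminal, signaling channel, and observe that this is exactly where the restriction to actually measured (recorded) observables is needed. Two points deserve attention. First, the paper's proof adds a robustness step that you omit: the negation of the REACP only guarantees that \emph{some} registered values would change under the later manipulation, not that a single designated record flips deterministically, so the paper invokes repetition coding and interleaving (standard error-correcting techniques) to turn a possibly partial or statistical disturbance of the slot markers into a reliable one-bit YES/NO message. Your version tacitly assumes a clean deterministic one-bit channel, which is a genuine (if easily repaired) gap. Second, your treatment of the case where $E'$ lies in the time-like future of $E$ is incorrect as stated: a Lorentz boost cannot reverse the temporal order of time-like separated events, so no choice of second observer $O'$ makes that transmission superluminal. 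This does not damage the claim in the setting of the paper, where corresponding measurements are space-like separated by hypothesis, but the purported reduction should be deleted rather than left as a false step.
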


\begin{proof}
To prove this statement we show that the negation of the REACP allows superluminal signaling.  
To fix the ideas, assume that two slots $-1_i$ and $+1_i$ are associated to the $i$-th event 
in a sequence of events labeled by $i$.  The result of an experiment is then marked by putting a slot 
marker in one of these two slots.  Thus, if the REACP fails, the marker of some slot would move 
because of later manipulations in the time frame of some Lorentz observer.  Even if only some 
sufficient proportion of slots would get removed or displaced, standard error correcting techniques \bonlinecite{ECC} by 
repetition of symbols and intertwining of words (to better correct bunched errors) would allow to send YES or NO messages with high probability backward in time, which is the basic form of superluminal transmission.  This argument also shows that the negation of the REACP is stronger than the negation 
of SR, so that the REACP follows from SR.  This proves Claim \ref{REACPWeaker}. 
\end{proof}

On the other hand, we know (see, \emph{e.g.,} \bonlinecite{Eberhard1978}) that the negation of locality does not imply the possibility {of transmitting} superluminal \emph{messages} (symbolic sequences with information content).  From {that} we deduce the following result:
\begin{corollary}\label{REACPandRegWeaker} The conjunction of the REACP with the regularity hypothesis of Theorem \ref{maint} is strictly weaker than locality when assuming {\emph{[}}QM +CMR{\emph{]}}.
\end{corollary}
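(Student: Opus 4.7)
The plan is to decompose the statement into two halves: first, that under [QM+CMR] locality implies both the REACP and the regularity hypothesis of Theorem \ref{maint}, and second, that this implication is strict.

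For the first half, I would observe that locality trivially implies REACP, because locality is a blanket prohibition on any dependence of outputs on the other observer's choice at space-like separation (applied to all outputs, actual or CMR-inferred), whereas REACP restricts this prohibition to registered values of actual measurements. For the regularity, note that under locality Alice's output at her counterfactual angle $\theta_{a'}$ cannot depend on Bob's choice, so $a^{\bullet\circ}=a^{\bullet\bullet}$; symmetrically, $b^{\circ\bullet}=b^{\bullet\bullet}$. The last line of \eqref{eq:tmls} then gives the explicit formula $\inner{a^{\bullet\circ}}{b^{\circ\bullet}} = -\cos(\theta_{a'b'})$, which is real-analytic in the angle differences, and hence twice differentiable at $\bm{0}$, satisfying the regularity hypothesis of Theorem \ref{maint}.

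For the second half, I would invoke the cited theorem of Eberhard~\bonlinecite{Eberhard1978}, according to which the negation of locality does not itself enable the transmission of superluminal messages. Combined with the argument in the proof of Claim \ref{REACPWeaker}---where a violation of the REACP was shown to permit superluminal signaling---this exhibits a hypothetical framework in which locality fails while the REACP still holds. The regularity hypothesis is a pointwise analytic condition on a single correlation function at $\bm{0}$ and carries no locality content of its own, so it can in principle be imposed independently of locality. Hence REACP together with the regularity hypothesis is a strictly weaker assumption than locality.

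The main obstacle will be making the strictness claim precise, given that by Theorem \ref{maint} the joint hypotheses [QM+CMR] + REACP + regularity are themselves mutually inconsistent: one cannot exhibit a fully [QM+CMR]-consistent theory that witnesses the gap. The comparison must therefore be read at the level of the logical content of the hypotheses before imposing the contradiction, with Eberhard's decoupling of non-locality from no-superluminal-signaling---routed through Claim \ref{REACPWeaker}---providing the decisive separation between locality and REACP (and \emph{a fortiori} between locality and REACP + regularity).
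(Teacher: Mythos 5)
Your proposal is correct and takes essentially the same route as the paper: the paper's entire deduction of this corollary is the single observation, citing Eberhard, that the negation of locality does not enable superluminal messaging, combined with the proof of Claim \ref{REACPWeaker} showing that a violation of the REACP does---which is exactly the separation you construct---while your first half (that locality forces $a^{\bullet\circ}=a^{\bullet\bullet}$, $b^{\circ\bullet}=b^{\bullet\bullet}$ and hence $\inner{a^{\bullet\circ}}{b^{\circ\bullet}}=-\cos(\theta_{a'b'})$, which is analytic) makes explicit what the paper leaves implicit in Remark \ref{toto}. Your closing caveat, that ``strictly weaker'' is delicate because both hypotheses are inconsistent with [QM+CMR], is a genuine subtlety the paper does not address, and your resolution (comparing logical content via the signaling criterion) is the natural reading.
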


\section{Discussion and conclusions}\label{sec:last}
In conclusion,  Theorems \ref{diff0} and \ref{maint}, Claim \ref{REACPWeaker}, and Corollary \ref{REACPandRegWeaker} {taken together} express a non-realism statement that is real progress on the traditional form of Bell's Theorem. 
As Mermin states in \bonlinecite{Mermin}(page 812), locality is a more compelling assumption than non-contextuality. From this point of view, our results should also be compared with the quite interesting Bell-Kochen-Specker type results that have flourished recently.  Indeed, for {those} who believe {in} both QM and (the kinematics part of) SR and are willing to sacrifice {CMR}, the only {small} gap that remains to {achieve} a full replacement of von Neumann's faulty theorem is one degree of differentiability {for} one of the correlations that come {up} in Bell's theory, a correlation that we do prove to be differentiable at 0, in contradiction with what one expects from a CMR-compatible correlation \bonlinecite{Bell3}. 

For those who want to keep {CMR} as part of a description of the Quantum world while preserving (the kinematics part of) SR, our result constrains the kind of 
correlation {functions} that could appear in such theories.  In particular, it would be interesting to know the implications of the present work in Bohm-de Broglie's 
interpretation of QM.  Although traditional formulations of Bohmian mechanics seem to be fundamentally incompatible with Special Relativity, recent work by 
D\"urr \emph{et al}, \bonlinecite{BL}, suggests a ``formulation of Bohmian theories that seem to qualify as fundamentally Lorentz invariant''.  In that case, 
even if the underlying correlations are not fully computable, it would be interesting to infer 
some of their qualitative properties and compare them with the restrictions expressed in Theorems \ref{diff0} and \ref{maint}; 
as well as to deduce the physical meaning of the mandatory `jumps' in the correlation $\inner{a^{\bullet\circ}}{b^{\circ\bullet}}$ 
within that interpretation.

\medskip 
\textbf{Acknowledgments:} 
This work was made possible by support from FAPESP through ``Projeto Tem\'atico {\it Din\^amica  em Baixas Dimens\~oes\/}'', Proc. FAPESP {2011/16265-2}, and also Proc. FAPESP 2012/19995-0. The  support of Palis-Balzan's Prize is also gratefully acknowledged. We thank A. Fine for his comments on a former version of the manuscript.

\end{document}